\DeclareMathOperator{\F}{\mathbb F}
\DeclareMathOperator{\W}{\mathcal W}
\DeclareMathOperator{\C}{ {\mathcal C}}
\newcommand{\mF}{\mathcal F}
\newcommand{\set}[1]{\left\{{#1}\right\}}
\DeclareMathOperator{\sgn}{sgn}
\newcommand{\fs}[2]{p_{#2}(#1)/s_{#2}(#1)}
\theoremstyle{note}
\newtheorem{example}{Example}
\newtheorem{remark}{Remark}
\theoremstyle{plain}
\newtheorem{lemma}{Lemma}
\begin{document}

\title{
Recursive Trellis Processing of Large Polarization Kernels
}
\author{Peter Trifonov}
\author{\IEEEauthorblockN{Peter Trifonov} 
\IEEEauthorblockA{ITMO University,
Russia\\Email: pvtrifonov@itmo.ru}}

\maketitle

\begin{abstract}
A reduced complexity algorithm is presented for computing the log-likelihood ratios arising in the successive cancellation decoder for polar codes with large kernels of arbitrary dimension. The proposed algorithm exploits recursive trellis representation of the codes generated by submatrices of the polarization kernel, and enables codes based on large kernels to provide better performance compared to the codes based on Arikan kernel with the same decoding complexity.
 \end{abstract}
\section{Introduction}
Polar codes, introduced by Arikan in \cite{arikan2009channel}, have already made their way into the 5G specification. However, their role is limited there for encoding of small data blocks, since long polar codes do not compete well with other modern codes, such as LDPC. Polar codes with large kernels were shown to have asymptotically optimal scaling exponent \cite{fazeli2018binary}.
These codes have decoding complexity $O(n\log n)$ operations of computing LLRs. However, each such operation, called kernel processing or kernel marginalization \cite{bioglio2018marginalization}, has complexity $O(2^ll)$ if implemented straightforwardly. An approximate kernel processing method based on box-and-match decoder was suggested in \cite{miloslavskaya2014sequentialBCH}.  Some kernels of dimension 16 and 32 were published together with efficient processing algorithms \cite{trofimiuk2019reduced}. It was shown that the corresponding polar codes have lower decoding complexity compared to codes based on Arikan kernel with the same performance.  However, no efficient processing techniques for generic polarization kernels have been published so far.

In this paper we present such algorithm, and report its complexity for some kernels available in the literature. We show that the proposed approach has much lower complexity compared to the Viterbi algorithm.
\section{Background}
\subsection{Polar codes}
Polar code \cite{korada2010polar} is a set of vectors $c_0^{n-1}=u_0^{n-1}K^{\otimes l}$, where $K$ is a non-singular $l\times l$ matrix called polarization kernel, $n=l^m$, $u_i=0$ for $i\in \mF$,  $\mF\subset[n]$ is a frozen set, and $[n]=\set{0,\dots,n-1}$.
This definition can be generalized to obtain mixed kernel polar codes with the codewords given by $c_0^{n-1}=u_0^{n-1}(K_{l_1}\otimes K_{l_2}\otimes \cdots\otimes K_{l_m})$ \cite{presman2016mixed,bioglio2020multikernel}, where $K_{l_i}$ is a kernel of dimension $l_i$. Unless stated otherwise, we consider here the case of all kernels being the same.

Decoding of polar codes can be implemented by the successive cancellation algorithm, which makes decisions $$\hat u_i=\begin{cases}0,&i\in \mF\\
\arg\max_{u_i\in \F_2} \W_m^{(i)}(\hat u_0^{i-1}\bullet u_i|r_0^{n-1}),&i\notin\mF,\end{cases}$$
where 
$
\W_m^{(i)}(u_0^i|r_0^{n-1})=\sum_{u_{i+1}^{n-1}}\prod_{j=0}^{n-1}W_{0}^{(0)}((u_0^{n-1}K^{\otimes l})_i|r_i),
$  $\W_0^{(0)}(c|r)=W_0^{(0)}(c|r)=\frac{W(r|c)}{2W(r)}$,  $W(r|c)$ is the channel transition probability, and $\bullet$ denotes the concatenation operator.
These probabilities can be recursively computed as 
\begin{align}
\W_m^{(li+s)}(u_0^{li+s}|r_0^{n-1})=\nonumber\quad\quad\quad\quad\quad\quad\quad\quad\quad\quad\quad\quad\quad\\
\sum_{u_{s+1}^{l-1}}\prod_{j=0}^{l-1}\W_{m-1}^{(i)}\left((u_{lt}^{l(t+1)-1}K),t\in[i+1]|r_{0,j}^{n-1}\right), \end{align}
where $r_{0,j}^{n-1}=(r_j,r_{j+l},\dots,r_{j+n-l})$. This operation is known as kernel processing or kernel marginalization \cite{trifonov2014binary,bioglio2018marginalization}.

It was suggested in \cite{miloslavskaya2014sequentialBCH} to approximate the above probabilities as
$
\W_m^{(i)}(u_0^i|r_0^{n-1})\approx W_m^{(i)}(u_0^i|r_0^{n-1})=\max_{u_{i+1}^{n-1}}\prod_{j=0}^{n-1}W_{0}^{(0)}((u_0^{n-1}K^{\otimes l})_i|r_i),
$ so that 
\begin{align}
W_m^{(li+s)}(u_0^{li+s}|r_0^{n-1})=\nonumber\quad\quad\quad\quad\quad\quad\quad\quad\quad\quad\quad\quad\quad\\
\max_{u_{s+1}^{l-1}}\prod_{j=0}^{l-1}W_{m-1}^{(i)}\left((u_{lt}^{l(t+1)-1}K),t\in[i+1]|r_{0,j}^{n-1}\right).
\label{mKernProcMax} 
\end{align}
Decoding can be implemented in the LLR domain using LLRs 
$$ S_m^{(i)}(u_0^{i-1},r_0^{n-1})=
\ln\frac{W_m^{(i)}(u_0^{i-1}\bullet0|r_0^{n-1})}{W_m^{(i)}(u_0^{i-1}\bullet1|r_0^{n-1})}.$$
Assume for the sake of simplicity  $m=1$. It can be seen that 
\begin{align}
S_1^{(i)}(u_0^{i-1},r_0^{n-1})=&\max_{u_{i+1}^{l-1}}\sum_{j=0}^{l-1}\ln W_0^{(0)}\left(((u_0^{i-1}\bullet 0\bullet u_{i+1}^{l-1})K)_j|r_j\right)\nonumber\\
-&\max_{u_{i+1}^{l-1}}\sum_{j=0}^{l-1}\ln W_0^{(0)}\left(((u_0^{i-1}\bullet 1\bullet u_{i+1}^{l-1})K)_j|r_j\right)\nonumber\\
=&\frac{1}{2}\max_{u_{i+1}^{l-1}}Q((u_0^{i-1}\bullet0\bullet u_{i+1}^{l-1})K,r_0^{l-1})\nonumber\\
-&\frac{1}{2}\max_{u_{i+1}^{l-1}}Q((u_0^{i-1}\bullet 1\bullet u_{i+1}^{l-1})K,r_0^{l-1}),
\label{mLLREW}
\end{align}
where $Q(c_0^{l-1},r_0^{l-1})=\sum_{j=0}^{l-1}(-1)^{c_j}S_0^{(0)}(r_j)$
is the correlation function.
The objective of the present paper is to provide an efficient method for computing LLRs $S_1^{(i)}(u_0^{i-1},r_0^{n-1})$.
\subsection{Recursive maximum likelihood decoding algorithm}
It can be seen that computing \eqref{mLLREW} reduces to ML\ decoding in the cosets of the code generated by $l-i-1$ last rows of matrix $K$, provided that all symbols $u_s,s\in[l]$, are equiprobable binary random values.
An efficient recursive maximum likelihood decoding algorithm for linear block codes was suggested in \cite{fujiwara1998trellisbased}.
The idea is to recursively partition the received noisy vector into a number of sections $[x,y)$, identify for each section a number of most likely vectors $c_x^{y-1}\in \F_2^{y-x}$ corresponding to the received values $r_x^{y-1}$, and recursively combine them to obtain most likely vectors for longer sections.

Given a linear code $C$, let $C_{h,h'}$ be its subcode, such that all its codewords have non-zero  symbols only in positions $h\leq i<h'$.   Let $p_{h,h'}(C)$ be a linear code obtained by puncturing all symbols, except those in positions  $h\leq i<h'$, from codewords of $C$. Let us further define $s_{h,h'}(C)=p_{h,h'}(C_{h,h'})$, i.e. a code  obtained from $C$ by shortening it on all symbols except those with indices
$h\leq i<h'$. The codes $s_{h,h'}(C)$ and $p_{h,h'}(C)$ will be referred to as {\em section codes}. Consider a minimal trellis of code $C$, and its section corresponding to symbols from $x$ to $y$. It is possible to show that the paths between two adjacent states in this section correspond to a coset in $p_{x,y}(C)/s_{x,y}(C)$ \cite{fujiwara1998trellisbased}. This coset may appear in the trellis several times. Hence, one can simplify  maximum likelihood (ML) decoding by pre-computing the metrics of these paths. That is, for each coset $D\in p_{x,y}(C)/s_{x,y}(C)$ one needs to identify the most probable element $l(D)$, and store its correlation  $Q(D)=Q(l(D),r_x^{y-1})$. Let the {\em composite branch table (CBT)}  $T_{x,y}$ be an array containing values  $T_{x,y}[v].l=l(D)$ and $T_{x,y}[v].q=Q(D),$ where $v$ is an index of $D$.   In the case of conventional  decoding of an $(n,k)$ code, $p_{0,n}(C)/s_{0,n}(C)$ contains a single element, so the corresponding CBT\ has one entry $T_{0,n}[0]$, which gives a solution of the ML  decoding problem. 
 
The straightforward approach to  construction of a composite branch table for some code $C$ is to enumerate all codewords of $p_{x,y}(C)$, and find the most probable one for each coset in $p_{x,y}(C)/s_{x,y}(C)$. We assume that this method is used for $y-x\leq 2$. However, more efficient approach was suggested in  \cite{fujiwara1998trellisbased} for the case of  $y-x\geq 2$.  Consider some $z:x<z<y$.  Let the generator matrix of $p_{x,y}(C)$ be represented as  \begin{equation}
\label{mMSF}
G^{(p)}_{x,y}=\begin{pmatrix}
G^{(s)}_{x,z}&0\\
0&G^{(s)}_{z,y}\\
G^{(00)}_{x,y}&G^{(01)}_{x,y}\\\hline
G^{(10)}_{x,y}&G^{(11)}_{x,y}\\
\end{pmatrix},
\end{equation}
where $G^{(s)}_{x,y}=\begin{pmatrix}
G^{(s)}_{x,z}&0\\
0&G^{(s)}_{z,y}\\
G^{(00)}_{x,y}&G^{(01)}_{x,y}
\end{pmatrix}$ is a generator matrix of $s_{x,y}(C)$, and $G_{x,y}^{(00)},G_{x,y}^{(01)}$ are some $k_{x,y}''\times(z-x)$ and $k_{x,y}''\times (y-z)$ matrices, respectively, where $k_{x,y}'=k_{x,y}'(C)$ and $k_{x,y}''=k_{x,y}''(C)$ are some code-dependent integers. There is an one-to-one correspondence between vectors $vG_{x,y}'$, where $G_{x,y}'=\begin{pmatrix}G^{(10)}_{x,y}&G^{(11)}_{x,y}\end{pmatrix}$ is a $k'_{x,y}\times (y-x)$ matrix, and  cosets  $D\in\fs{C}{x,y}$.  Here $k_{x,y}',k_{x,y}''$ are some integers, which depend on code structure, and can be obtained from the minimum span form of its generator matrix.

Hence, we write $T_{x,y}[v].l:=l(D)$ and $T_{x,y}[v].q=Q(D)$, with $D$ being a coset corresponding to $v$.
It can be seen that 
\begin{align}
T_{x,y}[v].q=&\max_{c_{x}^{y-1}\in D}Q(c_x^{y-1},r_x^{y-1})\nonumber\\
=&\max_{w\in \F_2^{k_{x,y}''}} \left(T_{x,z}[a].q+T_{z,y}[b].q\right),v\in \F_2^{k'_{x,y}}
\label{mCBTMax}
\end{align}
where $a$ and $b$ are indices of the cosets $D'\in\fs{C}{x,z}$ and $D''\in\fs{C}{z,y}$, respectively,  such that  $\begin{pmatrix}w&v\end{pmatrix}\begin{pmatrix}G^{(00)}_{x,y}\\G^{(10)}_{x,y}\end{pmatrix}\in D'$ and  $\begin{pmatrix}w&v\end{pmatrix}\begin{pmatrix}G^{(01)}_{x,y}\\G^{(11)}_{x,y}\end{pmatrix}\in D''$.   Such values $a,b$ can be identified from the  system of equations
\begin{align*}
\begin{pmatrix}a'&a
\end{pmatrix}\begin{pmatrix}
G_{x,z}^{(s)}\\
G_{x,z}'
\end{pmatrix}=&\begin{pmatrix}w&v\end{pmatrix}\begin{pmatrix}G^{(00)}_{x,y}\\G^{(10)}_{x,y}\end{pmatrix}\\
\begin{pmatrix}b'&b
\end{pmatrix}\begin{pmatrix}
G_{z,y}^{(s)}\\
G_{z,y}'
\end{pmatrix}=&\begin{pmatrix}w&v\end{pmatrix}\begin{pmatrix}G^{(01)}_{x,y}\\G^{(11)}_{x,y}\end{pmatrix},
\end{align*}
where $a',b'$ are some irrelevant values.
Obviously, the solutions are given by $a=\begin{pmatrix}w&v\end{pmatrix}\widehat G_{x,y}$ and $b=\begin{pmatrix}w&v\end{pmatrix}\widetilde G_{x,y}$ for some matrices $\widehat G_{x,y}$ and $\widetilde G_{x,y}$. The corresponding most likely coset representatives are given by $T_{x,y}[v].l=T_{x,z}[\hat a].l\bullet  T_{x,z}[\hat b].l$, where $\hat a,\hat b$ are the values of $a$ and $b$, which deliver maximum in \eqref{mCBTMax}.

The complexity of this calculation is $O(2^{k_{x,y}'+k_{x,y}''})$. It can be further reduced by exploiting the  tricks suggested in \cite{fujiwara1998trellisbased}. The overall decoding complexity strongly depends on the sectionalization method being used, i.e. a rule for selection of the partitioning point $z$ for some $x,y$. This approach, known as recursive maximum likelihood decoding (RMLD), was shown to be  more efficient compared to the Viterbi algorithm \cite{fujiwara1998trellisbased}. 
\section{Recursive trellis processing}
\subsection{Extended kernel codes}
Polar codes can be considered as a result of recursive application of the construction of generalized concatenated codes \cite{blokh1974coding,trifonov2012efficient}.
These codes rely on non-systematic inner codes, as well as the corresponding soft-decision decoding algorithms. An optimal soft-input soft-output decoding algorithm for non-systematically encoded linear block codes was presented in \cite{griesser2002aposteriori}. This algorithm can be easily tailored to implement computation of \eqref{mLLREW}. To do this, let us consider an extended $(l+1,l-i)$ code $\overline {\mathcal C}^{(i)}$ generated by matrix $G^{(i)}$. First $l$ columns of this matrix are obtained by taking $l-i$ last rows of kernel $K$. The last column has $1$ in the $0$-th row, and zeroes in the remaining positions. Assuming that $u_0^{i-1}=0$, one obtains that computing \eqref{mLLREW}  is equivalent to finding the most probable codewords of code $\overline {\mathcal C}^{(i)}$ having $0$ and $1$  in the last symbol. This can be implemented by running the Viterbi algorithm over the trellis of  $\overline {\mathcal C}^{(i)}$, assuming that the last codeword symbol is erased. The same trellises, although with  different labeling, can be used to implement decoding in the cosets of the extended codes, which arise in the case of $u_0^{i-1}\neq 0$.
\begin{example}
Consider  Arikan kernel $F_2=\begin{pmatrix}
1&0\\1&1
\end{pmatrix}$. One obtains $G^{(0)}=\begin{pmatrix}
1&0&1\\1&1&0
\end{pmatrix}$ and $G^{(0)}=\begin{pmatrix}
1&1&1
\end{pmatrix}$. 
\end{example}
 
%
%
%
%
%
%
%
%
%
%
\subsection{Recursive processing of polarization kernels}
We propose to compute \eqref{mLLREW} by applying the RMLD algorithm to the cosets of the extended kernel codes. Let $\overline \C(u_0^{i-1})=w+\overline \C^{(i)}$ be the coset of $\C^{(i)}$ given by prior decisions $u_0^{i-1}$, where $w=(u_0^{i-1}K_{0,\dots,i-1},0)$, and $K_{0,\dots,i-1}$ is the matrix consisting of $i$ top rows of $K$. For any section $[x,y)$, the cosets associated with states in the recursive trellis for $\overline \C(u_0^{i-1})$ are obtained from those for $\overline \C^{(i)}$  as $D(u_0^{i-1})=\set{f+w_{x}^{y-1}|f\in D}, D\in\fs{\overline \C^{(i)}}{x,y}$.

It can be seen that $\fs{\overline \C^{(i)}}{0,l}$
contains two cosets, which correspond to $u_i=0$ and $u_i=1$.
Hence, 
\begin{equation}
\label{mFinalLLR}
S_1^{(i)}(u_0^{i-1},r_0^{l-1})=\frac{T_{0,l}[0].q-T_{0,l}[1].q}{2},
\end{equation}
 where $T_{0,l}$ is the composite branch table constructed for $\overline\C(u_0^{i-1})$  given noisy vector $r_0^{l-1}$.  It is assumed in what follows that $CBT$ entries contain only $q$ values, and the corresponding coset representatives $l(D)$ are omitted.

Furthermore, we propose to reuse the composite branch tables, or their parts, obtained at successive phases $i$. To do this, we need to identify how CBTs evolve with $i$, find a way to handle prior decisions $u_0^{i-1}$, design efficient algorithms for construction of CBTs for short sections, and obtain an optimal sectionalization strategy.
\subsubsection{Reusing the CBTs}
Assume that the same sectionalization is used for all phases $i$.  
Obviously, $p_{x,y}(\overline C^{(i+1)})\subset p_{x,y}(\overline C^{(i)})$ and $s_{x,y}(\overline C^{(i+1)})\subset s_{x,y}(\overline C^{(i)}), i\in[l-1]$
for any $x,y$, such that $0\leq x<y\leq l$. 
\begin{lemma}
If $p_{x,y}(\overline C^{(i+1)})=p_{x,y}(\overline C^{(i)})$ and 
$s_{x,y}(\overline C^{(i+1)})=s_{x,y}(\overline C^{(i)})$, then for any $ u_i\in\F_2$ the composite branch table $T_{x,y}$ constructed for $\overline\C(u_0^{i-1})$ is identical to the one constructed for $\overline\C(u_0^{i})$, denoted $T_{x,y}'$, for the same received vector $r_0^{l-1}$.
\end{lemma}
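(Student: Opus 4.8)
The plan is to compare the two tables entry by entry, reducing everything to the cosets that index them and to the (fixed) received vector. Recall that each CBT value is a maximum of the correlation over a section coset, $T_{x,y}[v].q=\max_{c}Q(c,r_x^{y-1})$, where $c$ ranges over the coset indexed by $v$; this number depends only on that coset as a subset of $\F_2^{y-x}$ and on $r_x^{y-1}$, which is the same vector in both phases. So it suffices to show that the family of section cosets attached to the trellis states is the same for $\overline\C(u_0^{i-1})$ and $\overline\C(u_0^{i})$, and that the entry labelled $v$ carries the same coset in both.

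First I would make the offset change explicit. Since $u_0^{i}K_{0,\dots,i}=u_0^{i-1}K_{0,\dots,i-1}+u_iK_i$, the two coset offsets differ exactly by $(u_iK_i,0)$, hence on the section $[x,y)$ they differ by $u_i(K_i)_x^{y-1}$. By hypothesis $p_{x,y}(\overline\C^{(i)})=p_{x,y}(\overline\C^{(i+1)})$ and $s_{x,y}(\overline\C^{(i)})=s_{x,y}(\overline\C^{(i+1)})$, so the quotients $\fs{\overline\C^{(i)}}{x,y}$ and $\fs{\overline\C^{(i+1)}}{x,y}$ are literally the same family of cosets, indexed by the same $v$ through the same minimum-span-form generator $G_{x,y}'$. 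The section cosets of the two translated codes are therefore $D+w_x^{y-1}$ and $D+w_x^{y-1}+u_i(K_i)_x^{y-1}$ as $D$ ranges over this common family.

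The engine of the argument is that $(K_i)_x^{y-1}\in p_{x,y}(\overline\C^{(i)})$: indeed $(K_i,1)$ is the first generator of $\overline\C^{(i)}$, so its restriction lies in the punctured code. Consequently adding $u_i(K_i)_x^{y-1}$ leaves $p_{x,y}(\overline\C^{(i)})$ invariant and only permutes the cosets of $s_{x,y}(\overline\C^{(i)})$ among themselves; equivalently, the punctured codes of the two translated coset codes coincide, and with them their partitions into $s_{x,y}$-cosets. Since $r_x^{y-1}$ is unchanged, the metric stored for each such coset is the same, so the two tables agree as unindexed collections of coset--metric pairs.

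The step I expect to be the real obstacle is upgrading this to an entry-by-entry identity, i.e.\ controlling the induced permutation of the index $v$. That permutation is trivial precisely when $u_i(K_i)_x^{y-1}\in s_{x,y}(\overline\C^{(i)})$, so the crux is whether the hypotheses force $(K_i)_x^{y-1}$ into the \emph{shortened} code rather than merely into the punctured one. I would try to extract this from the simultaneous equality of both section codes by a dimension count on the chain of codes generated by the successive tails $K_i,K_{i+1},\dots,K_{l-1}$ of the rows of $K$; should it fail to hold in general, the honest conclusion is that the tables coincide up to a fixed relabeling of $v$ determined by the coset of $u_i(K_i)_x^{y-1}$ in $\fs{\overline\C^{(i)}}{x,y}$, which is exactly the information the reuse step requires.
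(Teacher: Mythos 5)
Your proof is essentially the paper's own argument: the paper likewise observes that the equal section codes force equal table sizes, that $\kappa=(K_{i,x},\dots,K_{i,y-1})\in p_{x,y}(\overline C^{(i+1)})$, and that consequently for either value of $u_i$ both $\fs{\overline \C^{(i)}}{x,y}$ and $\fs{\overline \C^{(i+1)}}{x,y}$ have cosets containing $u_i\kappa$ --- concluding, exactly as you do, that the two tables contain the same values.

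Regarding the obstacle you flag in your last paragraph: your suspicion is correct, and the dimension-count rescue cannot succeed. The hypotheses do not force $(K_i)_x^{y-1}$ into the shortened code. Take the Arikan kernel $F_2$, $i=0$, section $[0,1)$: then $p_{0,1}(\overline C^{(0)})=p_{0,1}(\overline C^{(1)})=\F_2$ and $s_{0,1}(\overline C^{(0)})=s_{0,1}(\overline C^{(1)})=\set{0}$, yet $K_{0,0}=1\notin\set{0}$, and for $u_0=1$ the two two-entry tables are swapped rather than equal. So under a fixed $v$-indexing the lemma's word ``identical'' is literally too strong; what is true --- and all that the paper's own proof actually claims --- is equality as unindexed collections of coset--metric pairs, i.e., equality up to translation of $v$ by the index of the coset containing $u_i\kappa$. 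That translation is precisely the offset $h_{j,x,y}$ (the partial-sum analogue) that the paper introduces in its handling of prior decisions, cf.~\eqref{mCBTOffset}, so your fallback conclusion is exactly the statement the paper proves and subsequently uses.
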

\begin{proof}
Since both section codes are identical for phases $i$ and $i+1$, the CBTs have the same size.
$p_{x,y}(\overline C^{(i+1)})=p_{x,y}(\overline C^{(i)})$ implies that $\kappa=(K_{i,x},\dots,K_{i,y-1})\in p_{x,y}(\overline C^{(i+1)})$.  Hence, for any $u_i\in\F_2$ both $\fs{\overline \C^{(i)}}{x,y}$ and $\fs{\overline \C^{(i+1)}}{x,y}$ have cosets containing $u_i\kappa,u_i\in \F_2$. This implies that both $T_{x,y}$ and $T_{x,y}'$ contain the same values.
\end{proof}
The above lemma suggests that one does not need to recompute the CBTs for those sections, where section codes do not change from phase $i$ to phase $i+1$. 

In what follows, we assume that  
\begin{equation}
\label{sSubsectionNoShortChange}
s_{x,z}(\C^{(i+1)})=s_{x,z}(\C^{(i)}) \text{\,and\,} s_{z,y}(\C^{(i+1)})=s_{z,y}(\C^{(i)}).
\end{equation}  
Let us  assume temporarily that $u_0^{i-1}=0$.

Even if section codes do change, it is still possible to reuse some results obtained at prior phases. Let $k_{i,x,y}'=k_{x,y}'(\C^{(i)})$ and $k_{i,x,y}''=k_{x,y}''(\C^{(i)})$. First, observe that if $k_{i+1,x,y}''=k_{i,x,y}''$,
but $k_{i+1,x,y}'<k_{i,x,y}'$, then $\fs{\C^{(i+1)}}{x,y}\subset \fs{\C^{(i)}}{x,y}$, so that the corresponding CBT at phase $i+1$ can be obtained as a subvector of the CBT at phase $i$.

Second, we propose to implement maximization recursively, and keep all intermediate results. More specifically, we propose to rewrite 
\eqref{mCBTMax} as 
\begin{align}
T_{x,y}[v].q=&\max_{w\in \F_2^{k_{i,x,y}''}} \left(T_{x,z}[a].q+T_{z,y}[b].q\right)\nonumber\\
=&\max_{w_{k_{i,x,y}''-1}}\dots \max_{w_1}\max_{w_0}\left(T_{x,z}[a].q+T_{z,y}[b].q\right).
\label{mCBTMinSuccessive}
\end{align}
Instead of storing in the CBT\ the final results of maximization in \eqref{mCBTMinSuccessive}, we propose to keep the intermediate results of maximization for all $w$. These values can be arranged in a binary tree for each $v\in \F_2^{k'_{x,y}}$, so that a path from a root in this tree can be specified by values $w_{k_{i,x,y}''-1},w_{k_{i,x,y}''-2},\dots,w_0$. By maximization forest we denote the set of such trees obtained at some phase for a given section.  The subtrees within a forest can be indexed by variables $w,v$.

We propose to use the above described maximization tree constructed at some phase $i_0$ to obtain CBTs for all $i\geq i_0$, where \eqref{sSubsectionNoShortChange} holds. Let $i_1>i_0$ be the smallest integer, where this does not hold.
 \begin{lemma}
\label{lReduceShortened}
Let $G_{j,x,y}''=\begin{pmatrix}
G^{(00)}_{x,y}&G^{(01)}_{x,y}
\end{pmatrix}$ and $G_{j,x,y}'=\begin{pmatrix}
G^{(10)}_{x,y}&G^{(11)}_{x,y}
\end{pmatrix}$ 
be the matrices obtained from \eqref{mMSF} for code $\C^{(j)}$ for any $j$.
 If all matrices $G_{i,x,y}''$ are nested, so that 
 $G_{i+1,x,y}''$ occupies top rows of $G_{i,x,y}''$ for any $i:i_0\leq i< i_1$, then the maximization forest for phase $i:i_{0}\leq i\leq i_1$ can be obtained by taking the subtrees of the trees in the forest constructed at phase $i_0$,  given by values $w$ and $v$ satisfying the equation
\begin{equation}
\label{mSubforestReduction}
 \begin{pmatrix}
w&v
\end{pmatrix}\begin{pmatrix}
G_{i_0,x,y}''\\
G_{i_0,x,y}'
\end{pmatrix}=
\begin{pmatrix}
\overline w&\overline v
\end{pmatrix}\begin{pmatrix}
G_{i,x,y}''\\
G_{i,x,y}'
\end{pmatrix},
\end{equation}
where $\overline w\in \F_2^{k_i'',x,y},\overline v\in \F_2^{k_i',x,y}$ denote the subtree indices in the forest at phase $i$.
\end{lemma}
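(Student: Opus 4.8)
The plan is to show that every node of the phase-$i$ maximization forest coincides --- in its stored correlation and in its whole subtree --- with a node of the forest already built at phase $i_0$, the matching being the one recorded by \eqref{mSubforestReduction}. First I would fix the section $[x,y)$ together with its split point $z$ and work modulo the direct sum $s_{x,z}(\C^{(i)})\oplus s_{z,y}(\C^{(i)})$; by \eqref{sSubsectionNoShortChange} these shortened subcodes do not depend on the phase for $i_0\le i\le i_1$, so by the previous lemma the sub-tables $T_{x,z}$ and $T_{z,y}$ are literally the same arrays at every such phase. A node of the forest is therefore determined by the pair of sub-table indices $(a,b)$ that a codeword of $p_{x,y}(\C^{(i)})$ induces through the block decomposition \eqref{mMSF}, and its stored value is $T_{x,z}[a].q+T_{z,y}[b].q$; crucially, this value depends only on the image of the codeword in the quotient $p_{x,y}(\C^{(i)})/(s_{x,z}(\C^{(i)})\oplus s_{z,y}(\C^{(i)}))$.

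Next I would read \eqref{mSubforestReduction} as an identity in this quotient and check that it assigns to every phase-$i$ index $(\overline w,\overline v)$ a unique phase-$i_0$ index $(w,v)$. Existence follows from $\C^{(i)}\subseteq\C^{(i_0)}$, hence $p_{x,y}(\C^{(i)})\subseteq p_{x,y}(\C^{(i_0)})$, which realizes the right-hand codeword of \eqref{mSubforestReduction}, modulo the shortened subcodes, as a combination of the rows of $\left(\begin{smallmatrix}G_{i_0,x,y}''\\ G_{i_0,x,y}'\end{smallmatrix}\right)$; uniqueness follows from the independence of those rows modulo $s_{x,z}(\C^{(i_0)})\oplus s_{z,y}(\C^{(i_0)})$, which is guaranteed by the minimum-span form \eqref{mMSF}. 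As the two sides of \eqref{mSubforestReduction} then represent the same codeword modulo the shortened subcodes, they induce the same index pair $(a,b)$ and, by the observation above, the same stored value, which settles the equality of the leaf correlations.

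The step I expect to be the main obstacle is upgrading this leaf-by-leaf agreement to an identity of whole subtrees, and this is where the nesting hypothesis on the $G_{i,x,y}''$ is used. In \eqref{mCBTMinSuccessive} the coordinates of $w$ are eliminated in a fixed order, $w_0$ innermost at the leaves and $w_{k_{i,x,y}''-1}$ outermost at the root, so descending into a subtree amounts to freezing the high-index coordinates while leaving the low-index ones free to be maximized. The assumption that $G_{i+1,x,y}''$ occupies the top rows of $G_{i,x,y}''$ makes $G_{i,x,y}''$ the first $k_{i,x,y}''$ rows of $G_{i_0,x,y}''$, so that $\overline w G_{i,x,y}''=(\overline w,0)G_{i_0,x,y}''$. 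I would use this to argue that, for a fixed $\overline v$, the solution of \eqref{mSubforestReduction} freezes $v$ and the high-index block of $w$ to values determined by $\overline v$, while the coordinates of $\overline w$ sweep exactly the low-index block of $w$ in the same elimination order. Consequently the phase-$i$ tree attached to coset $\overline v$ is literally the subtree of the phase-$i_0$ forest reached by descending along those frozen high-index coordinates; combined with the equality of leaf values this yields the asserted subtree identity, and since $(a,b)$ depend affinely on $(w,v)$ the record of which leaf attains each maximum transfers unchanged. The delicate point throughout is to keep the elimination order of the coordinates aligned with the row ordering of $G_{i,x,y}''$, so that the phase-$i$ computation is a genuine subtree rather than merely an isomorphic copy.
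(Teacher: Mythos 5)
Your proposal is correct and takes essentially the same route as the paper's own proof: the leaf values are phase-independent by \eqref{sSubsectionNoShortChange}, the inclusion $p_{x,y}(\C^{(i)})\subset p_{x,y}(\C^{(i_0)})$ gives a unique solution $(w,v)=(\overline w,\overline v)M_{i,x,y}$ of \eqref{mSubforestReduction}, and the nesting of the $G''_{i,x,y}$ matrices forces $w_j=\overline w_j$ on the low-index coordinates while the remaining coordinates and $v$ are frozen by $\overline v$, exhibiting each phase-$i$ tree as a subtree of a phase-$i_0$ tree. Your write-up is simply more explicit than the paper's (reading \eqref{mSubforestReduction} modulo the shortened section codes, spelling out existence/uniqueness, and noting that the argmax records transfer), but the decomposition and the key nesting argument coincide.
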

\begin{proof}
Assumption \eqref{sSubsectionNoShortChange} ensures that the maximization trees for all $i$ are obtained from the same values.

Observe that the matrices in \eqref{mSubforestReduction} together with the generator matrices of $s_{x,z}(\C^{(i_0)})$ and $s_{z,y}(\C^{(i_0)})$ constitute generator matrices of $p_{x,y}(\C^{(i_0)})$ and $p_{x,y}(\C^{(i)})$, given by \eqref{mMSF}. Since $p_{x,y}(\C^{(i)})\subset p_{x,y}(\C^{(i_0)})$, for any $\overline w,\overline v$ there is a unique solution $\begin{pmatrix}
 w&v
\end{pmatrix}=\begin{pmatrix}
\overline w&\overline v
\end{pmatrix}M_{i,x,y}$ of \eqref{mSubforestReduction}, where $M_{i,x,y}$ is a matrix, which can be constructed at the design time.
 Since $G_{i,x,y}$ occupies $k_{i,x,y}''$ top rows of $G_{i_0,x,y}''$, one obtains $w_j=\overline w_j,0\leq j< k_{i,x,y}''$. Hence, for each $\overline v$ the corresponding maximization tree indeed appears as a subtree indexed by $w_j$ of the $v$-th maximization tree.
\end{proof}
\begin{remark}
Observe that it is always possible to obtain  $G_{j,x,y}''$ in the form required by Lemma  \ref{lReduceShortened} by applying elementary row operations.
\end{remark}

\subsubsection{Handling prior decisions}
The SC decoder needs to take into account at phase $i$ the values $u_0^{i-1}$.
This reduces to decoding in cosets of section codes. The corresponding coset representative for section $[x,y)$ can be computed as a linear combination of subvectors  $\begin{pmatrix}
K_{j,x}&K_{j,x+1}&\dots&K_{j,y-1}
\end{pmatrix}, 0\leq j<i$, of the kernel.

 Consider first
the case of section $[x,y)$, where the maximization forest \eqref{mCBTMinSuccessive} is constructed from scratch. If there exists a solution of 
\begin{equation}
\begin{pmatrix}
f_{j,x,y}&h_{j,x,y}
\end{pmatrix}\begin{pmatrix}
G^{(s)}_{i,x,y}\\
G_{i,x,y}'
\end{pmatrix}=\begin{pmatrix}
K_{j,x}&K_{j,x+1}&\dots&K_{j,y-1}
\end{pmatrix},
\end{equation}
then the CBT for section $[x,y)$ contains at position $h_{j,x,y}$ the required coset representative. In this case we assume $h_{j,x',y'}=0$ for all $x<x'<y'<y$.
Otherwise, we assume $h_{j,x,y}=0$.  Given a vector of prior decisions $u_0^{i-1}$, we obtain the corresponding position offset at section $[x,y)$ as $h_{x,y}=\sum_{j=0}^{i-1}u_jh_{j,x,y}$, so that \eqref{mCBTMinSuccessive} becomes 
\begin{align}
\label{mCBTOffset}
T_{x,y}[v].q=\max_{w} \left(T_{x,z}[a+h_{x,z}].q+T_{z,y}[b+h_{z,y}].q\right).
\end{align}
The values  $h_{x,y}$  are similar to partial sums, which arise in the SC decoder of Arikan polar codes.

For those sections, where the CBT is obtained by taking subtrees of \eqref{mCBTMinSuccessive},
we need to check if there is a solution of 
\begin{equation}
\omega_{j,x,y}\begin{pmatrix}
G^{(s)}_{i_0,x,y}\\
G_{i_0,x,y}'
\end{pmatrix}=\begin{pmatrix}
K_{j,x}&K_{j,x+1}&\dots&K_{j,y-1}
\end{pmatrix}.
\end{equation}
If this equation does not have a solution for some $j<i$, the corresponding coset representatives have already been accounted for at smaller sections while constructing the CBT for section $[x,y)$ at phase $i_0$. and we assume $\omega_{j,x,y}=0$. Otherwise, the corresponding CBT\ can be obtained from the maximization forest constructed at phase $i_0<i$ by taking entries 
$$\begin{pmatrix}
 w&v
\end{pmatrix}=\begin{pmatrix}
\overline w&\overline v
\end{pmatrix}M_{i,x,y}+\sum_{j=0}^{i-1}u_i\omega_{j,x,y}.$$

\subsubsection{Special trellises}
It can be seen  that the result of \eqref{mFinalLLR} does not change if the same value is subtracted from all CBT entries at any section. Doing this enables one to compute some  CBTs with smaller number of operations compared to \eqref{mCBTOffset}.   The following special cases were identified:
\begin{enumerate}
\item $k_{i,x,y}'=k_{i,x,y}''=1,\hat G_{i,x,y}=\begin{pmatrix}1\\0\end{pmatrix},\tilde G_{i,x,y}=\begin{pmatrix}1\\1\end{pmatrix}$. Let $a=\frac{T_{x,z}[0].q-T_{x,z}[1].q}{2}$, $b=\frac{T_{z,y}[0].q-T_{z,y}[1].q}{2}$. We propose to set $T_{x,y}[0].q=\sgn(a)\sgn(b)\min(|a|,|b|)$ and $T_{x,y}[1].q=-T_{x,y}[0].q$.    If one can guarantee that $T_{x,z}[1].q=-T_{x,z}[0].q$, then it is possible to further simplify computation by setting $a=T_{x,z}[0].q$. Similar simplification applies to $b$.  This trick allows one to construct section CBT using just 1 comparison operation, instead of 2 comparison and 4 summations for a straightforward implementation.
\item $k_{i,x,y}'=1,k_{i,x,y}''=0,\hat G_{i,x,y}=\tilde G_{i,x,y}=(1)$. Using the same definitions as above, one obtains $T_{x,y}[0].q=(-1)^{h_{x,z}}a+(-1)^{h_{z,y}}b,T_{x,y}[1].q=-T_{x,y}[0].q$. \item $k_{i,x,y}'=0,k_{i,x,y}''=2,\hat G_{i,x,y}=\begin{pmatrix}1\\0\end{pmatrix},\tilde G_{i,x,y}=\begin{pmatrix}0\\1\end{pmatrix}$. Using the same definitions as above, we propose to set $T_{x,y}[0].q=a+b,T_{x,y}[1].q=b-a,T_{x,y}[2].q=-T_{x,y}[1].q,T_{x,y}[3].q=-a$.
\item If there is all-1 row in $G_{i,x,y}^{(s)}$, then one step of maximization can be avoided in \eqref{mCBTMinSuccessive} by taking the absolute values of the corresponding terms. This reduces the complexity of construction of the maximization tree by a factor of 2.
\end{enumerate}
Simplification tricks 1 and 2 together with the results in 
 \cite{trifonov2019trellis} fully establish the equivalence of the proposed approach and the min-sum SC algorithm for the case of   $K=B_\mu\begin{pmatrix}1&0\\1&1\end{pmatrix}^{\otimes \mu}$, where $B_\mu$ is the bit reversal permutation matrix.

\subsubsection{Optimal sectionalization}
The total complexity of kernel processing is equal to $\mathbf C=\delta_i+\sum_{i=0}^{l-1} c_{i,0,l},$ where $\delta_i$ is the complexity of computing the final LLR from the obtained CBT, and $c_{i,x,y}$ is the complexity of construction of the CBT for section $[x,y)$ at phase $i$. In most cases the former operation reduces to computing \eqref{mFinalLLR}, i.e. costs 1 subtraction. However, if above described special trellises 1 or 2 arise at section $[0,l)$, then $\delta_i=0$.
Furthermore, $$c_{i,x,y}=\begin{cases}
m_{i,x,y},\text{\, if CBTs for subsections can be reused}\\
m_{i,x,y}+c_{i,x,z}+c_{i,z,y},\text{\,otherwise,}
\end{cases}$$
where $$m_{ixy}=\begin{cases}
0,\text{\,\,\,\,\,\, if forest reuse is possible}\\
M_j,\text{\, if type-$j$ special trellis is encountered,}\\
2^{k_{ixy}'+k_{ixy}''-f_{ixy}}+2^{k_{ixy}'}(2^{k_{ixy}''-f_{ixy}}-1),\text{otherwise,}
\end{cases}
$$   
$f_{ixy}$  is equal 1 if the above described 4-th simplification trick is applicable, and 0 otherwise, and $M_j$ is the complexity of type-$j$ special trellis, $1\leq j\leq 3$.  The first and second terms in the latter expression represent the number of summations and  comparisons, respectively. 
These expressions can be used in the optimization algorithm given in \cite{fujiwara1998trellisbased} to obtain  splitting position $z$ for each section $[x,y)$, so that the overall complexity is minimized. Observe that  sectionalization should be optimized jointly for  codes $\overline \C^{(i)}$ at all phases $i$.
\section{Numeric results}
Table \ref{tKernComp} presents processing complexity for some polarization kernels. Here  $K_l$ denotes a kernel of size $l$.
\begin{table}
\caption{Kernel processing complexity}
\label{tKernComp}
\setlength{\tabcolsep}{2pt}
\begin{tabular}{|c|>{$}c<{$}|>{$}c<{$}|c|c|c|c|c|c|}\hline
\multirow{2}{*}{Kernel $K_l$}&\multirow{2}{*}{$E(K_l)$}&\multirow{2}{*}{$\mu(K_l)$}&\multicolumn{3}{c|}{State of the art}&\multicolumn{2}{c|}{Proposed}\\\cline{4-8}
&&&Method&Add&Comp.&Add&Comp.\\\hline
$K_{16}B_4$ \cite{trofimiuk2021window}&0.51828&3.45 &window&95&86&131&105\\\hline
$K_{32}B_5$ \cite{trofimiuk2021window}&0.521936&3.417 &window&297&274&406&262\\\hline
$K_{32}^r$\cite{trofimiuk2021searchArXiv}&0.52194&3.42111&Viterbi&4536 &9072&355&191\\\hline
$K_{32}^{bch}$ \cite{moskovskaya2020design}&0.53656&3.1221&Viterbi&99745&199490&31079 &  28337\\\hline
$K_{32}^{enbch'}$ \cite{abbasi2020large}&0.53656&3.1221&window&\multicolumn{2}{c|}{2864420}&  32183   &29873\\\hline
$K_{20}^\ast$ \cite{trofimiuk2021searchArXiv}&0.506169&3.43827&Viterbi&7524&15054&2893  &  2001 \\\hline
$K_{20}$ \cite{trofimiuk2021searchArXiv}&0.49943&  3.64931&Viterbi&1866 &3756&289    & 189 \\\hline
$K_{24}^\ast$ \cite{trofimiuk2021searchArXiv}& 0.51577&3.3113&Viterbi&9922&19860&1621    &1207\\\hline
$K_{24}$ \cite{trofimiuk2021searchArXiv}&0.502911&3.61903&Viterbi&2102&4218&241    &124\\\hline
\end{tabular}
\end{table}
For comparison, we present the complexity of the window-based and Viterbi  processing algorithms.  To the best of our knowledge, no other processing algorithm was published for these kernels. It can be seen that for kernels given in \cite{trofimiuk2021window} the proposed approach has slightly higher complexity compared to the window processing algorithm. However, extension of the latter  algorithm to kernels of size other than $2^\mu$ is non-obvious, while the recursive trellis algorithm can be applied to any kernel. Observe that the the proposed approach provides huge complexity reduction compared to the Viterbi algorithm. 

\begin{figure}
\includegraphics[width=0.5\textwidth]{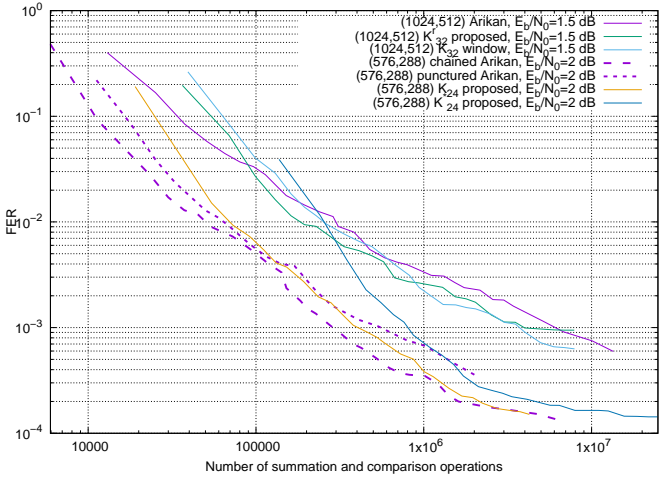}
\caption{Performance and SCL decoding complexity}
\label{fPerfCompl}
\end{figure}
\begin{figure}
\includegraphics[width=0.5\textwidth]{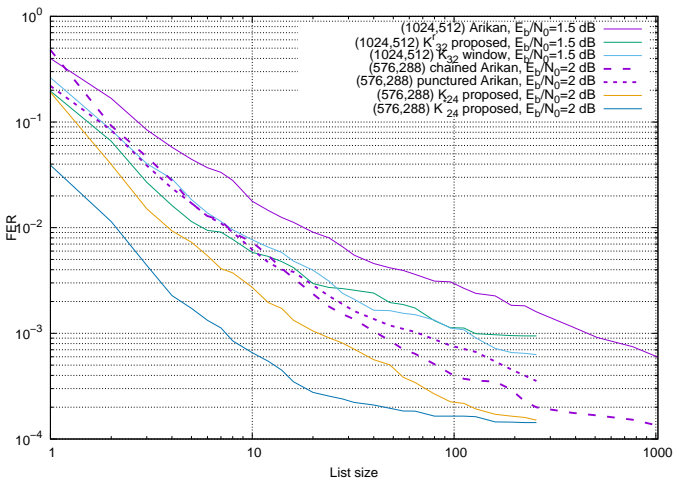}
\caption{Performance of polar codes under SCL decoding}
\label{fPerfList}
\end{figure}
Figure \ref{fPerfCompl} presents the performance and complexity of the successive cancellation list decoding algorithm with various list size for some polar subcodes \cite{trifonov2019construction}.
It can be seen that the proposed approach allows one to obtain better performance compared to the codes based on Arikan kernel with the same decoding complexity. More specifically,  $(1024,512)$ code based on  kernel $K_{32}^r$ appears to be better than the code based on Arikan kernel starting from $FER=3\cdot 10^{-2}$, which corresponds to list size 3 and  7 for $K_{32}^r$  and Arikan kernels, respectively.  Polar subcode  $(576,288)$ based on kernel $K_{24}$ outperform punctured Arikan polar subcode starting from $FER=3\cdot 10^{-3}$, which corresponds to list size 8 and 20, respectively, and approaches the performance of the chained polar subcode \cite{trifonov2018randomized}. High processing complexity of kernel $K_{24}^\ast$ allows the corresponding polar subcode  to outperform punctured polar subcode with Arikan kernel only at $FER=4\cdot 10^{-4}$, which corresponds to list size 14 and 192, respectively.

\begin{figure}
\includegraphics[width=0.5\textwidth]{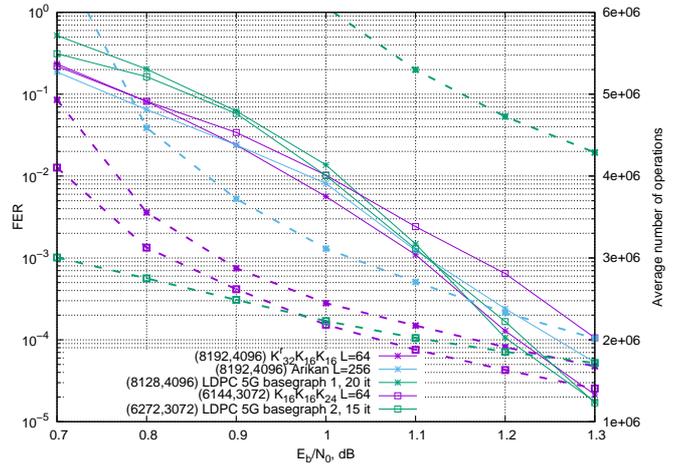}
\caption{Performance and decoding complexity of  long codes}
\label{fLongFER}
\end{figure}
Figure \ref{fLongFER} presents performance (solid lines) and average decoding complexity (dashed lines) of rate $1/2$ polar subcodes and 5G LDPC\ codes. Sequential \cite{trifonov2018score} and shuffled belief propagation \cite{zhang2005shuffled} algorithms  were used for decoding of polar subcodes and LDPC codes, respectively. Complexity is reported in terms of the number of summation and comparison operations for polar subcodes, and number of summations and calls to $\log\tanh(x/2)$ for LDPC codes. Polar subcodes were constructed using a mixture of kernels, as shown in the plot.  It can be seen that polar subcodes provide almost the same performance as LDPC codes.   At sufficiently high SNR, the decoding complexity of polar subcodes appears to be lower compared to the corresponding LDPC codes.  Observe also, that the code based on the Arikan kernel requires very large list size to obtain the performance comparable to the LDPC code. It also has higher decoding complexity compared to the code based on large kernels, and has inferior performance at high SNR.

\section{Conclusions}
In this paper a novel processing algorithm for large polarization kernels was proposed. This algorithm relies on 
 extensive reuse of the intermediate results arising in the recursive maximum likelihood decoding algorithm for the codes generated by submatrices of the considered kernel. The proposed algorithm can be applied to kernels of arbitrary dimension, and has much lower complexity compared to the Viterbi algorithm.
Derivation of a processor for a given kernel according to the proposed method involves  matrix manipulations, which should be performed once at the design time. Actual kernel processing reduces to summation and comparison of the elements of some arrays, where the indices of the operands are obtained as XOR of some pre-computed values, and partial sums given by the decisions of the SC\ algorithm.

The proposed algorithm enables polar subcodes with well-designed large kernels  to provide better performance/complexity tradeoff compared to the codes based on Arikan kernel as well as LDPC codes. However, finding large kernels with good polarization properties and low processing complexity remains, in general, an open problem.

\bibliographystyle{ieeetran}

\end{document}